\keywords{Proof Complexity, Computational Complexity, Lower Bounds, Cutting Planes, Stabbing Planes}
\Crefname{lem}{Lemma}{Lemma}
\Crefname{thm}{Theorem}{Theorem}
\Crefname{fact}{Fact}{Fact}
\Crefname{thmC}{Theorem}{Theorem}
\newtheorem{definition}[thm]{Definition}   
\DeclareMathOperator{\SAT}{\mathsf{SAT}}
\DeclareMathOperator{\CNF}{\mathsf{CNF}}
\DeclareMathOperator{\DPLL}{\mathsf{DPLL}}
\DeclareMathOperator{\ILP}{\mathsf{ILP}}
\DeclareMathOperator{\LP}{\mathsf{LP}}
\DeclareMathOperator{\SP}{\mathsf{SP}}
\DeclareMathOperator{\Res}{\mathsf{Res}}
\DeclareMathOperator{\CP}{\mathsf{CP}}
\DeclareMathOperator{\UNSAT}{\mathsf{UNSAT}}
\DeclareMathOperator{\slab}{\mathsf{slab}}
\DeclareMathOperator{\PHP}{\mathsf{PHP}}
\DeclareMathOperator{\LOP}{\mathsf{LOP}}
\DeclareMathOperator{\SPHP}{\mathsf{SPHP}}
\DeclareMathOperator{\TS}{\mathsf{Ts}}
\DeclareMathOperator{\VER}{\mathsf{VER}}
\DeclareMathOperator{\IND}{\mathsf{IND}}
\DeclareMathOperator{\Peb}{\mathsf{Peb}}
\newcommand{\T}{{\mathcal T}}
\newcommand{\F}{{\mathcal F}}
\newcommand{\sF}{{\mathscr F}}
\newcommand{\A}{{\mathcal A}}
\newcommand{\R}{{\mathcal R}}
\begin{document}

\title[Depth lower bounds in Stabbing Planes for combinatorial principles]{Depth lower bounds in Stabbing Planes\texorpdfstring{\\}{ }for combinatorial principles\rsuper*}
\titlecomment{{\lsuper*}An extended abstract of this paper has appeared at The 39th International Symposium on Theoretical Aspects of Computer Science (STACS 2022)  \cite{DantchevGGM22}}

\author[S.~Dantchev]{Stefan Dantchev\lmcsorcid{0000-0002-4534-2242}}[a]
\author[N.~Galesi]{Nicola Galesi\lmcsorcid{0000-0002-8522-362X}}[b]
\author[A.~Ghani]{Abdul Ghani}[a]
\author[B.~Martin]{Barnaby Martin\lmcsorcid{0000-0002-4642-8614}}[a]

\address{Durham University, Durham, DH1 3LE. UK}	
\email{s.s.dantchev@durham.ac.uk, barnaby.d.martin@durham.ac.uk}  

\address{Sapienza Universit\`a  di Roma, Rome, Italy}	
\email{galesi@di.uniroma1.it}  





\begin{abstract}
{\em Stabbing Planes} (also known as Branch and Cut) is a proof system introduced very recently which, informally speaking, extends the $\DPLL$ method by branching on integer linear inequalities instead of single variables.
The techniques known so far to prove size 
and depth lower bounds for Stabbing Planes are generalizations of those used for the {\em Cutting Planes} proof system. For size lower bounds these are established by monotone circuit arguments, while for depth these are found via communication complexity and protection. As such these bounds apply for lifted versions of combinatorial statements. Rank lower bounds for Cutting Planes are also obtained by geometric arguments called {\em protection lemmas}.
  
In this work we  introduce two  new geometric approaches to prove size/depth lower bounds in Stabbing Planes working for any formula: (1) the {\em antichain method}, relying on {\em Sperner's Theorem} and (2) the {\em covering method} 
which uses results on {\em essential  coverings} of the boolean cube by linear polynomials, which in turn  relies on {\em Alon's combinatorial Nullenstellensatz}.  

We demonstrate their use on  classes of  combinatorial principles such as the {\em Pigeonhole principle}, the {\em Tseitin} contradictions and the {\em Linear Ordering Principle}. 
By the first method we prove almost linear size lower bounds and optimal logarithmic depth lower bounds for the Pigeonhole principle and analogous {lower bounds} 
for the Tseitin contradictions over the complete graph and for the Linear Ordering Principle. 
By the  covering method we obtain  a superlinear  size lower bound and a logarithmic depth lower bound for Stabbing Planes proof of Tseitin contradictions over a grid graph.
\end{abstract}

\maketitle

\section{Introduction}
\label{intro}

Finding a satisfying assignment for a propositional formula ($\SAT$) is a central component for many computationally hard problems. Despite being older than 50 years and exponential time in the worst-case, the  $\DPLL$  algorithm \cite{DBLP:journals/cacm/DavisLL62,DBLP:journals/jacm/DavisP60,DBLP:journals/jacm/Robinson65} is the core of essentially all high performance modern $\SAT$-solvers. $\DPLL$ is a recursive boolean method: at each call one variable $x$ of the formula $\mathcal{F}$  is chosen and the search recursively branches into the two cases  obtained by setting $x$ respectively to $1$ and $0$ in $\mathcal{F}$. On $\UNSAT$ formulas  $\DPLL$ performs the worst and it is well-known that the execution trace of the $\DPLL$ algorithm  running on an unsatisfiable formula $\mathcal{F}$ is nothing more than a treelike refutation of $\mathcal{F}$ in the proof system of {\em Resolution} \cite{DBLP:journals/jacm/Robinson65} ($\Res$). 

Since $\SAT$ can be viewed as an optimization problem the question whether Integer Linear Programming ($\ILP$) can  be  made feasible for satisfiability testing received a lot of attention and is considered among the most challenging problems in local search \cite{DBLP:conf/ijcai/SelmanKM97,DBLP:conf/cp/KautzS03}.   One proof system capturing $\ILP$ approaches to $\SAT$ is {\em Cutting Planes}, a system whose main  rule implements the {\em rounding} (or {\em Chv\'atal cut}) approach to $\ILP$. Cutting planes works with  integer linear inequalities of the form $\mathbf a \mathbf x \leq b$, with $\mathbf a,b$ integers, and, like resolution, is a sound and complete refutational proof system for $\CNF$ formulas: indeed a clause $C=(x_1\vee\ldots \vee x_r\vee \neg y_1 \vee \ldots \vee \neg y_s)$ can be written as the integer inequality  $\mathbf y - \mathbf x  \leq  s-1$.     

Beame et al. \cite{DBLP:conf/innovations/BeameFIKPPR18}, extended the idea of $\DPLL$  to a more general proof strategy based on $\ILP$.  Instead of branching only on a variable as in resolution, in this method one considers a pair $(\mathbf a ,b)$, with $\mathbf a \in \mathbb Z^n$ and $b \in \mathbb Z$,  and branches limiting the search to the two half-planes: $\mathbf a \mathbf x \leq  b-1$ and $\mathbf a \mathbf x \geq  b$.
A {\em path} terminates when the $\LP$ defined by the inequalities in $\mathcal{F}$ and those forming the path is infeasible. 
This method  can be  made into a refutational treelike proof system for $\UNSAT $ CNF's called {\em Stabbing planes} ($\SP$) (\cite{DBLP:conf/innovations/BeameFIKPPR18})   and it turned out that it is  polynomially equivalent to the treelike version of $\Res(\CP)$, a proof system introduced by Kraj\'i\v{c}ek \cite{DBLP:journals/jsyml/Krajicek98}  where clauses are disjunction of linear inequalities. \textcolor{black}{Furthermore, Stabbing Planes captures the popular branch-and-cut $\ILP$ algorithms.} 

In this work we consider the complexity of proofs in $\SP$ focusing on the {\em length}, i.e. the number of queries in the proof;  the {\em depth} (called also {\em rank} in \cite{DBLP:conf/innovations/BeameFIKPPR18}), i.e. the length of the longest path in the proof tree; and the {\em size},  i.e. the bit size of all the coefficients appearing in the proof.

\subsection{Previous works and motivations}
After its introduction as a proof system in the work \cite{DBLP:conf/innovations/BeameFIKPPR18} by Beame, Fleming, Impagliazzo,  Kolokolova,  Pankratov,  Pitassi and Robere, {\em Stabbing Planes} received great attention. The quasipolynomial upper bound for the size of refuting Tseitin contradictions in $\SP$ given in \cite{DBLP:conf/innovations/BeameFIKPPR18} was surprisingly extended to $\CP$  in the work of  \cite{DBLP:conf/coco/DadushT20} of Dadush and Tiwari refuting a long-standing conjecture.  Recently in \cite{Fleming21}, Fleming, G\"{o}\"{o}s, Impagliazzo, Pitassi, Robere, Tan and Wigderson were further developing the initial results proved in \cite{DBLP:conf/innovations/BeameFIKPPR18}
 making important progress on the question whether all Stabbing Planes proofs can be somehow efficiently simulated by Cutting Planes.
 
Significant lower bounds for \textcolor{black}{depth} can be obtained \textcolor{black}{for} $\SP$, using modern developments of  a technique for $\CP$ based on communication complexity of search problems  introduced by Impagliazzo, Pitassi, Urquhart in \cite{impagliazzo1994upper}: 
in \cite{DBLP:conf/innovations/BeameFIKPPR18} it is proven that  size $S$ and \textcolor{black}{depth} $D$ $\SP$ refutations imply treelike $ \Res(\CP)$ proofs of size $O(S)$ and width $O(D)$; Kojevnikov \cite{DBLP:conf/sat/Kojevnikov07}, improving the {\em interpolation method} introduced for $\Res(\CP)$ by Kraj\'i\v{c}ek \cite{DBLP:journals/jsyml/Krajicek98}, gave   exponential lower bounds for treelike $\Res(\CP)$ when the width of the clauses (i.e. the number of linear inequalities in a clause) is bounded by $o(n/ \log n)$. \textcolor{black}{However \cite{DBLP:conf/innovations/BeameFIKPPR18} shows that there are no $n/\log n$ depth treelike $ \Res(\CP)$ proofs of the given formula at all}.
Hence these lower bounds are applicable only to very specific classes of formulas (whose hardness comes from boolean circuit hardness) and  only to $\SP$ refutations of low depth.   

Nevertheless $\SP$ appears to be a strong proof system. Firstly notice that the condition terminating a path in a proof
is not a trivial contradiction like in resolution, but is the infeasibility of an $\LP$, which is only a polynomial time verifiable condition. Hence linear size $\SP$ proofs might be already a strong class of $\SP$ proofs, since they can hide a polynomial growth into one final node whence to run the verification of the terminating condition. 

\subsubsection*{Rank and depth in $\CP$ and $\SP$}

It is  known that, contrary to the case of other proof systems like Frege, neither $\CP$ nor $\SP$ proofs can be balanced (see \cite{DBLP:conf/innovations/BeameFIKPPR18}), in the sense that a depth-$d$ proof can always be transformed into a size $2^{O(d)}$ proof. The depth of $\CP$-proofs of a set of linear inequalities ${L}$ is measured by the {\em Chv\'atal rank} of the associated polytope $P$.\footnote{This is the minimal $d$ such that  $P^{(d)}$ is empty, where $P^{(0)}$ is the polytope associated to $L$ and $P^{(i+1)}$ is the polytope defined by all inequalities which can be inferred from those in $P^{(i)}$ using one Chv\'atal cut.} It is known that rank in $\CP$ and depth in $\SP$ are separated, in the sense that Tseitin principles can be proved in  depth $O(\log^2 n)$ depth in  $\SP$ \cite{DBLP:conf/innovations/BeameFIKPPR18}, but are known to require rank $\Theta(n)$ to be refuted  in $\CP$ \cite{5authors}. In this paper we further develop the study of proof depth for $\SP$.

Rank lower bound techniques for Cutting Planes are essentially of two types. The main method is by reducing to the real communication complexity of certain search problem \cite{impagliazzo1994upper}. As such this method
only works for classes of formulas {\em lifted} by certain gadgets capturing specific  boolean functions.  
A second class of methods have been developed for Cutting Planes, which lower bound
the rank measures of a polytope. In this setting, lower bounds are typically
proven using a geometric method called {\em protection lemmas} \cite{5authors}. These methods were recently extended in \cite{Fleming21} also to the case of Semantic Cutting Planes.  In principle this  geometric  method  can be applied to any  formula and not only to the lifted ones, furthermore for many formulas (such as the Tseitin formulas)
it is known how to achieve $\Omega(n)$ rank  lower bounds in $\CP$ via protection lemmas, while
proving even $\omega(\log n)$ lower bounds via real communication complexity is impossible, due to a known
folklore upper bound.

Lower bounds for depth in Stabbing Planes, proved in \cite{DBLP:conf/innovations/BeameFIKPPR18},  are instead obtained only as a consequence of the real communication approach extended to Stabbing Planes.   In this paper we introduce  two  geometric approaches to prove depth lower bounds in $\SP$. 

Specifically the results we know at present relating $\SP$ and $\CP$ are:
 
\begin{enumerate}
\item $\SP$ polynomially simulates $\CP$ (Theorem 4.5 in \cite{DBLP:conf/innovations/BeameFIKPPR18}). Hence in particular the $\PHP^m_n$ can be refuted in $\SP$  by a proof of size  $O(n^2)$ (\cite{CookCoullardTuran}). Furthermore it can be refuted  by a 
$O(\log n)$ depth proof since polynomial size $\CP$ proofs, by Theorem 4.4 in \cite{DBLP:conf/innovations/BeameFIKPPR18}, can be balanced in $\SP$.\footnote{Another way of proving this result is using  Theorem  4.8 in \cite{DBLP:conf/innovations/BeameFIKPPR18}  stating that if there are length $L$ and  space $S$ $\CP$ refutations of a set of linear integral inequalities, then there are  depth $O(S \log L)$ $\SP$ refutations of the same set of linear integral inequalities;  and then use the result  in \cite{DBLP:conf/coco/GalesiPT15}  (Theorem 5.1) that $\PHP^m_n$ has polynomial length and constant space $\CP$ refutations.}

\item Beame et al. in \cite{DBLP:conf/innovations/BeameFIKPPR18} proved the surprising result that the class of Tseitin contradictions $\TS(G,\omega)$ over any graph $G$ of maximum degree $D$, with an odd charging $\omega$, can be refuted in $\SP$ in size  quasipolynomial in $|G|$ and depth $O(\log^2|G|+D)$.  

\item \textcolor{black}{Fleming et al. in \cite{Fleming21} proved that a size $S$ (and maximal coefficient size $C$) $\SP$
refutation of a unsatisfiable formula $F$ over $n$ variables can be converted  into a $\CP$ refutation of $F$ of size
$S(Cn)^{\log S}$. However in this case the depth of the proof may potentially blow-up as well.} 

\end{enumerate}
Depth lower bounds for $\SP$ are proved  in  \cite {DBLP:conf/innovations/BeameFIKPPR18}:  
\begin{enumerate}
\item a $\Omega(n/\log^2n)$ lower bound for the formula 
$\TS(G,w) \circ \VER^n$,  composing $\TS(G,\omega)$ (over an expander graph $G$)  with the gadget function $\VER^n$ 
(see Theorem 5.7 in \cite{DBLP:conf/innovations/BeameFIKPPR18} for details); and 
\item a $\Omega(\sqrt{n	\log n})$  lower bound for the formula $\Peb(G) \circ \IND^n_l$ over $n^5+n\log n$ variables obtained by lifting a pebbling formula $\Peb(G)$ over a graph with high pebbling number, with a {\em pointer function} gadget $\IND^n_l$ (see Theorem   5.5. in \cite{DBLP:conf/innovations/BeameFIKPPR18} for details).
\end{enumerate}

Similar to size, these depth lower bounds are  applicable only to very specific classes of formulas.  In fact they are obtained by extending to $\SP$ the technique introduced in  \cite{impagliazzo1994upper,DBLP:journals/mlq/Krajicek98} for $\CP$ of  
reducing shallow proofs of a  formula $\mathcal{F}$ to efficient {\em real} communication protocols  computing a related search problem and then proving that such efficient protocols cannot exist. 

\textcolor{black}{The only lower bounds techniques on the depth of Stabbing Planes proofs come from reductions to communication complexity, which is a lower bound technique for $\CP$. This is also in contrast with other weaker proof systems such as Resolution and Cutting Planes, where we have direct combinatorial and geometric techniques for proving depth lower bounds. Direct lower bound techniques are valuable as they are tailored to the proof system and thus shed light on its behaviour and weaknesses, unlike semantic techniques such as reductions to monotone circuits or communication complexity, which prove lower bounds on more general objects (such as monotone circuits and real communication protocols).}


In this work we address such problems.

\subsection{Contributions and techniques}
The main motivation of this work was to study size and depth lower bounds in $\SP$ through new methods, possibly geometric. Differently from weaker systems like Resolution, except for the technique highlighted above and based on reducing  to the communication complexity of search problems, 
we do not know of other methods to prove size and depth lower bounds in  $\SP$. In $\CP$  and  Semantic $\CP$ instead geometrical methods based on protection lemmas were used to prove rank lower bounds in \cite{5authors,Fleming21}.

Our first steps in this direction were to set up methods working for truly combinatorial statements, like $\TS(G,w)$ or $\PHP^m_n$, which we know to be efficiently provable in $\SP$, but on which we cannot  use methods reducing to the complexity of boolean functions, like the ones based on communication complexity. 

We present two new methods for proving depth lower bounds in $\SP$ which in fact are the consequence of proving length lower bounds that do not depend on the bit-size of the coefficients. 

As applications of our two methods we respectively prove:  
\begin{enumerate}
 
\item An exponential separation between the rank\footnote{\textcolor{black}{The distinction between $\CP$ depth and rank comes from the latter only counting applications of the rounding rule.}} in $\CP$ and the depth in $\SP$, using a new  counting principle which we introduce and that we call the {\em Simple Pigeon Principle} $\SPHP$. We prove that $\SPHP$ has $O(1)$ rank in $\CP$ and requires $\Omega(\log n)$ depth in $\SP$. Together with the results proving that Tseitin formulas requires $\Omega(n)$ rank lower bounds in $\CP$ (\cite{5authors})  and 
$O(\log^2n)$ upper bounds for the depth in $\SP$ (\cite{DBLP:conf/innovations/BeameFIKPPR18}), this proves an incomparability between the two measures.  

\item An almost linear lower bound on the size of $\SP$ proofs of the $\PHP^m_n$ and  for Tseitin  $\TS(G,\omega)$ contradictions over the complete graph. These lower bounds immediately give optimal $\Omega(\log n)$ lower bound for the depth of $\SP$ proofs of the corresponding principles. 
\item An almost linear lower bound for the size and \textcolor{black}{$\Omega(\log n)$} lower bound of the depth for the the Linear Ordering Principle $\LOP_n$.  

\item Finally, we prove a superlinear lower bound for the size of $\SP$ proofs   of  $\TS(G,\omega)$, when $G$ is a $n \times n$ grid graph  $H_n$. In turn this implies an $\Omega(\log n)$ lower bound for the depth of $\SP$ proofs of  $\TS(H_n,\omega)$. Proofs of depth $O(\log^2 n)$  for $\TS(H_n,\omega)$ are given in \cite{DBLP:conf/innovations/BeameFIKPPR18}.

\end{enumerate}

Our results are derived from the following initial geometrical observation: let $\mathbb S$ be a  space of {\em admissible points}
in $\{0,1,1/2\}^n$ satisfying a given unsatisfiable system of integer linear inequalities $\mathcal{F}(x_1,\ldots,x_n)$. In a $\SP$ proof for $\mathcal{F}$, at each 
branch $Q=(\mathbf a, b)$ the set of points in the $\slab(Q)=\{\mathbf s \in \mathbb S :  b-1< \mathbf a \mathbf x < b\}$ does not survive.  At the end of the proof on the leaves, where we have infeasible $\LP$'s, no point in $\mathbb S$ can  survive the proof.  So it is  sufficient  to find conditions such that, under the assumption that  a proof of $\mathcal{F}$  is ``small'',  even one point of $\mathbb S$  survives the proof. In pursuing this approach we use two methods.  


The {\em antichain method}. Here we use a well-known bound based on Sperner's Theorem \cite{anti2,lint01} to upper bound the number of points in the slabs where the set of non-zero coefficients is sufficiently large.  Trading between the number of such slabs and the number of points ruled out from the space  $\mathbb S$ of admissible points, we obtain the  lower bound.  

We initially present the method and the $\Omega(\log n)$  lower bound on a set of unsatisfiable integer linear inequalities - the {\em Simple Pigeonhole Principle} ($\SPHP$) - capturing the core of the counting argument used to prove the PHP efficiently in CP.  Since $\SPHP_n$ has rank $1$ $\CP$ proofs, it entails  a strong separation between $\CP$ rank and $\SP$ depth. 
We then apply  the method to $\PHP^m_n$ and to $\TS(K_n,\omega)$.

The {\em covering method}. The antichain method appears too weak to prove size and depth lower bounds  on $\TS(G,w)$, when $G$ is for example a grid or a pyramid.  To solve this case,  we consider another approach that we call the {\em covering method}: we reduce the problem of proving that one point in $\mathbb S$ survives from all the $\slab(Q)$ in a small proof of  $\mathcal{F}$, to the problem that a set of  polynomials which {\em essentially covers} the boolean cube  $\{0,1\}^n$  requires at least $\sqrt{n}$ polynomials, which is a well-known problem faced by Alon and  F\"uredi  in \cite{DBLP:journals/ejc/AlonF93} and by  Linial and Radhakrishnan in \cite{essential}. 
For this reduction to work we have to find a high dimensional projection of $\mathbb S$ covering the boolean cube and defined on variables effectively appearing in the proof.  We prove that cycles of distance at least 2 in $G$ work properly to this aim on $\TS(G,\omega)$. Since the grid $H_n$ has many such cycles, we can obtain
the lower bound on  $\TS(H_n,\omega)$. The use of Linial and Radhakrishnan's result is not new in proof complexity.
Part and Tzameret in \cite{DBLP:journals/cc/PartT21}, independently of us, \textcolor{black}{were using this result in a similar way to us to prove size lower bounds in the proof system  $\Res(\oplus)$ over integers which handles clauses over  linear  equations, and not relying on  integer linear inequalities and geometrical reasoning}. 

\smallskip
We  remark that while we were writing this version of the  paper, Yehuda and Yehudayoff in  \cite{YY21}  slightly improved 
the results of \cite{essential} with the consequence, noticed in their paper too, that our size lower bounds for $\TS(G,\omega)$ over a grid graph is in fact superlinear.

The paper is organized as follows: We give the preliminary definitions in the next section and then we 
move to other sections: one on the lower bounds by the antichain method and the other on lower bounds by the covering method. The antichain method is presented on the formulas $\SPHP$, $\PHP^m_n$, $\TS(K_n,\omega)$ and $\LOP_n$. The covering method is presented for the formulas $\TS(G,\omega)$ where $G$ is a grid graph.

\section{Preliminaries}
We use $[n]$ for the set  $\{1,2,\ldots, n\}$, $\mathbb{Z}/2$ for 
$\mathbb Z \cup (\mathbb{Z} + \frac{1}{2})$ and $\mathbb Z^+$ for $\{1,2,\ldots\}$.

\subsection{Proof systems}
Here we recall the definition of the Stabbing Planes proof system from \cite{DBLP:conf/innovations/BeameFIKPPR18}.

\begin{definition}
	A \emph{linear integer inequality} in the variables $x_1, \ldots, x_n$ is an expression of the form $\sum_{i = 1}^n a_i x_i \geq b$, where each $a_i$ and $b$ are integral. A set of such inequalities is said to be {\em unsatisfiable} if there are no $0/1$ assignments to the $x$ variables satisfying each inequality simultaneously.
\end{definition}
\noindent Note that we reserve the term infeasible, in contrast to unsatisfiable, for (real or rational) linear programs.
\begin{definition} \label{def:SP}
Fix some variables $x_1, \ldots, x_n$. A \emph{Stabbing Planes ($\SP$)} 
proof of  a set of integer linear inequalities $\mathcal{F}$ is a binary tree $\T$, with each node labeled with a \emph{query} $(\mathbf a, b)$ with $\mathbf a \in \mathbb{Z}^n, b \in \mathbb{Z}$. Out of each node we have an edge labeled with $\mathbf a x \geq b$ and the other labeled with its integer negation $\mathbf a  x \leq b-1$. Each leaf $\ell$ is labeled with a $\LP$ system $P_\ell$ made by  a nonnegative linear combination of inequalities from $\mathcal{F}$ and the inequalities labelling the edges on the path from the root of $\T$ to the leaf $\ell$. 

If $\mathcal{F}$ is an {\em unsatisfiable} set of integer linear inequalities, $\T$ is  a \emph{Stabbing Planes ($\SP$)}  {\em refutation} of $\mathcal{F}$ if all the $\LP$'s $P_\ell$ on the leaves of $\T$ are infeasible. 
 \end{definition}

\begin{definition}
	The \emph{slab}  corresponding to a query $Q=(\mathbf a, b)$ is the set $\slab(Q)=\{\mathbf x \in \mathbb{R}^n : b -1 < \mathbf a \mathbf x < b \}$ satisfying neither of the associated inequalities.
\end{definition}

Since each leaf in a $\SP$ refutation is labelled by an infeasible $\LP$, 
throughout this paper we will actually use the following geometric observation on $\SP$ proofs $\T$: the set of points in $\mathbb{R}^n$ must all be ruled out by a query somewhere in $\T$. 
In particular this will be true for those points in  $\mathbb{R}^n$  which  satisfy a set  of integer linear inequalities $\mathcal{F}$ and which we call {\em feasible points} for $\mathcal{F}$.

\begin{fact}
\label{fact:slab}
	The slabs associated with a $\SP$ refutation must cover the feasible points of $\mathcal{F}$. That is,
	
	\[
	\{\mathbf y \in \mathbb{R}^n : \mathbf a \mathbf  y \geq b \text{ for all } (\mathbf a, b) \in \mathcal{F} \} \subseteq \bigcup_{(\mathbf a, b) \in \mathcal{F}} \{\mathbf x \in \mathbb{R}^n : b -1 < \mathbf a \mathbf x < b \}
	\]
\end{fact}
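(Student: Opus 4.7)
The plan is to prove the fact by a direct contrapositive-style tree traversal argument. Let $\T$ be a Stabbing Planes refutation of $\mathcal{F}$ and fix a point $\mathbf y \in \mathbb{R}^n$ which is feasible for $\mathcal{F}$, i.e.\ $\mathbf a \mathbf y \geq b$ for every $(\mathbf a, b) \in \mathcal{F}$. (I should note in passing that the union on the right-hand side of the displayed inclusion is naturally indexed by the queries $(\mathbf a, b)$ labelling the internal nodes of $\T$ rather than by $\mathcal{F}$ itself, so I will read the statement in that way.) The goal is to exhibit at least one internal node of $\T$ whose associated slab contains $\mathbf y$.

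I would argue by contradiction: suppose $\mathbf y \notin \slab(Q)$ for every query $Q = (\mathbf a, b)$ occurring in $\T$. Then for each such query, the scalar $\mathbf a \mathbf y$ lies outside the open interval $(b-1, b)$, so either $\mathbf a \mathbf y \geq b$ or $\mathbf a \mathbf y \leq b-1$ holds, meaning $\mathbf y$ satisfies one of the two inequalities labelling the edges out of that node. This gives a canonical way to trace a root-to-leaf path in $\T$: starting at the root, at each internal node follow the outgoing edge whose labelling inequality is satisfied by $\mathbf y$ (choosing arbitrarily if both hold, i.e.\ if $\mathbf a \mathbf y = b$). Since $\T$ is a finite binary tree, this walk terminates at a leaf $\ell$.

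By construction, $\mathbf y$ satisfies every edge-labelling inequality on the root-to-$\ell$ path, and by the assumption that $\mathbf y$ is feasible for $\mathcal{F}$, it satisfies every inequality of $\mathcal{F}$ as well. The LP system $P_\ell$ at the leaf, by Definition~\ref{def:SP}, is a nonnegative linear combination of inequalities drawn from these two sources, and nonnegative linear combinations preserve the property of being satisfied by $\mathbf y$. Hence $\mathbf y$ is a feasible point of $P_\ell$, contradicting the requirement that $P_\ell$ is infeasible in a valid $\SP$ refutation. Therefore $\mathbf y$ must belong to the slab of at least one query in $\T$, establishing the claimed inclusion.

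I do not anticipate any real obstacle here; the only subtle point is handling the boundary case $\mathbf a \mathbf y = b$, which is harmless since both edge inequalities are then satisfied and we may pick either child. The argument is essentially the geometric restatement of the soundness of branching on integer linear inequalities, and it is exactly this fact that justifies the slab-covering viewpoint used throughout the remainder of the paper.
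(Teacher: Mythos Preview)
Your argument is correct and is precisely the natural justification of this observation; the paper itself states Fact~\ref{fact:slab} without proof, treating it as an immediate geometric consequence of the definition of an $\SP$ refutation, so there is nothing to compare against. Your reading of the right-hand side as indexed by the queries of $\T$ rather than by $\mathcal{F}$ is also the intended one.

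One small inaccuracy worth cleaning up: the two edge inequalities $\mathbf a \mathbf x \geq b$ and $\mathbf a \mathbf x \leq b-1$ can never hold simultaneously (since $b > b-1$), so the ``boundary case $\mathbf a \mathbf y = b$'' you mention is not a case where both are satisfied; rather, exactly one of them holds whenever $\mathbf y \notin \slab(\mathbf a,b)$. This does not affect your proof, only the parenthetical remark.
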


The {\em length} of a $\SP$ refutation is the number of queries in the proof tree.
The {\em depth} of a $\SP$ refutation $\T$ is the longest root-to-leaf path
in $\T$.  The size (respectively depth) of refuting $\mathcal{F}$ in $\SP$ is the {\em minimum} size (respectively depth) over all $\SP$ refutations of $\mathcal{F}$.
We call {\em bit-size} of a $\SP$ refutation $\T$ the total number of bits needed to represent every inequality in the refutation. 

\begin{defiC}[\cite{CookCoullardTuran}]
	The \emph{Cutting Planes (CP)} proof system is equipped with boolean axioms and two inference rules:
$$
\begin{array}{c|c|c}
\mbox{Boolean Axioms}  & \mbox{Linear Combination} & \mbox{Rounding} \\
 \frac{ \quad \quad}{x \geq 0} \quad \frac{ \quad \quad}{-x \geq - 1} & 
 \frac{\mathbf a \mathbf x \geq c \quad \quad  \mathbf b \mathbf x \geq d}{\alpha\mathbf a \mathbf x  + \beta\mathbf b \mathbf x \geq \alpha c + \beta d }	
 & 
 \frac{\alpha \mathbf a\mathbf x \geq b}{\mathbf a \mathbf x \geq \lceil b/\alpha \rceil} 
\end{array}
$$	

\noindent where $\alpha,\beta,b \in \mathbb Z^+$ and $\mathbf a,\mathbf b \in \mathbb Z^n$. A CP refutation of some unsatisfiable set of integer linear inequalities  is a derivation of $0 \geq 1$ by the  aforementioned inference rules from the inequalities in $\mathcal F$.
\end{defiC}

 A $\CP$ refutation is {\em treelike} if the directed acyclic graph underlying the proof is a tree. The {\em length} of a $\CP$ refutation is the number of inequalities in the sequence. The {\em depth} is the length of the longest path from the root to a leaf (sink) in the graph. The {\em rank} of a $\CP$ proof is the maximal number of rounding rules used in a path of the proof graph.  The {\em size} of a $\CP$ refutation is the bit-size to represent all the inequalities in the proof. 

\subsection{Restrictions} 

Let $V=\{x_1,\ldots,x_n\}$ be a set of $n$ variables  and let  $\mathbf a \mathbf x \leq  b$ be a linear integer inequality.  
We say that a variable $x_i$ {\em appears in}, or is {\em mentioned by} a query $Q=(\mathbf a, b)$ if $a_i \not =0$ and {\em does not appear} otherwise.

A {\em restriction} $\rho$ is a function  $\rho:D \rightarrow \{0,1\}$, $D\subseteq V$.  A restriction acts on  a half-plane $\mathbf a \mathbf x\leq b$ setting 
the $x_i$'s according to $\rho$. 
Notice that the variables $x_i \in D$  do not  appear in the restricted half-plane.

By $\T\!\!\restriction_\rho$ we mean to apply the restriction $\rho$ to all the queries in a $\SP$ proof $\T$. The tree  $\T\!\!\restriction_\rho$  defines a new $\SP$ proof: if some $Q\!\!\restriction_\rho$ reduces to  $0 \leq - b$, for some $b\geq1$, then  that node becomes a leaf in  $\T\!\!\restriction_\rho$. Otherwise in $\T\!\!\restriction_\rho$ we simply branch on $Q\!\!\restriction_\rho$.  
Of course the solution space defined by the linear inequalities labelling a path in $\T\!\!\restriction_\rho$ is a subset of the solution space defined by the corresponding path in $\T$. Hence the leaves of $\T\!\!\restriction_\rho$ define an infeasible $\LP$.

We work with linear integer inequalities which are a translation of families of  CNFs $\F$. Hence when we write $\F\!\!\restriction_\rho$ we mean the applications of the restriction $\rho$ to the  set of linear integer inequalities defining $\F$.

 

   \section{The antichain method}
This method  is based on Sperner's theorem. Using it we can  prove  depth lower bounds in $\SP$ for  $\PHP^m_n$  and for Tseitin contradictions $\TS(K_n,\omega)$ over the complete graph.  To motivate and explain the main definitions,  we use as an  example a simplification of the $\PHP^m_n$, the {\em Simplified Pigeonhole principle} $\SPHP_n$, which has some interest since (as we will show) it exponentially separates $\CP$ rank from $\SP$ depth.

\subsection{Simplified Pigeonhole Principle}
As mentioned in the Introduction, the $\SPHP_n$ intends to capture 
the core of the counting argument used to efficiently refute the 
$\PHP$ in $\CP$.  
\begin{definition}
	The $\SPHP_n$ is the following unsatisfiable family of inequalities:
$$
\begin{array}{l}
\sum_{i = 1}^{n} x_i \geq 2 \\
x_i + x_j \leq 1  \qquad \text{ (for all $i \neq j \in [n]$)}
\end{array}
$$	
	
\end{definition}

\begin{lem}
	$\SPHP_n$ has a rank $1$ $\CP$ refutation, for $n \geq 3$.
\end{lem}

\begin{proof}
	Let $S := \sum_{i = 1}^{n} x_i$ (so we have $S \geq 2$). We fix some $i \in [n]$ and sum $x_i + x_j \leq 1$ over all $j \in [n] \setminus \{i\}$ to find $S + (n - 2) x_i \leq n - 1$. 
	We add this to $- S \leq - 2$ to get
	\[
	x_i \leq \frac{n - 3}{n - 2} 
	\]
	
\noindent	which becomes $x_i \leq 0$ after a single cut. We do this for every $i$ and find $S \leq 0$ - a contradiction when combined with the axiom $S \geq 2$.
\end{proof}

It is easy to see that $\SPHP_n$ has depth $O(\log n)$ proofs in $\SP$, either by a direct proof or appealing to the polynomial size proofs in $\CP$ of the $\PHP^m_n$ (\cite{CookCoullardTuran}) and then using the Theorem 4.4  in \cite{DBLP:conf/innovations/BeameFIKPPR18} informally stating that ``$\CP$ proofs can be balanced in $\SP$''.
\begin{cor}
	The $\SPHP_n$ has $\SP$ refutations of depth $O(\log n)$.
\end{cor}

We will prove that this bound is tight.

\subsection{Sperner's Theorem}

Let $\mathbf a \in \mathbb R^n$. The {\em width} $w(\mathbf a)$ of $\mathbf a$ is the number of non-zero coordinates in $\mathbf a$.  The width of a query  $(\mathbf a,b)$ is $w(\mathbf a)$, and the width of a $\SP$ refutation is the minimum width of its queries. 

Let $n \in \mathbb N$. Fix $W\subseteq [0,1] \cap \mathbb Q^+$ of finite size $k\geq 2$ and insist that $0 \in W$. The $W$'s we work with in this paper are $\{0, 1/2\}$ and $\{0, 1/2, 1\}$.

\begin{definition} 
A $(n,W)$-word is an element  in $W^n$.
\end{definition}

We consider the following  extension of Sperner's theorem.
\begin{thmC}[\cite{anti1,anti2}]
\label{thm:sperner}
	Fix any $t\geq 2, t \in \mathbb{N}$. For all $f \in \mathbb{N}$, with the pointwise ordering of $[t]^f$, any antichain has size at most $t^f \sqrt{\frac{6}{\pi (t^2 - 1) f}} (1 + o(1)).$
\end{thmC}

We will use the simplified bound that any antichain $\A$ has size $|\A| \leq \frac{t^f}{\sqrt{f}}$.

\begin{lem}
\label{lem:rulingout}
Let $\mathbf a \in \mathbb Z^n$ and $|W|=k \geq 2$. The number of $(n,W)$-words
$\mathbf s$ such that $\mathbf a \mathbf s =b$, where $b \in \mathbb Q$, is at most $\frac{k^n}{\sqrt{w(\mathbf a)}}$. 
\end{lem}

\begin{proof}
Define $I_{\mathbf a}=\{i\in [n] : a_i \not = 0\}$. Let $\preceq$ be the partial order over $W^{I_{\mathbf a}}$ where $\mathbf x \preceq  \mathbf y$ if $x_i \leq y_i$ for all $i$ with $a_i > 0$ and $x_i \geq y_i$ for the remaining $i$ with $a_i < 0$.  Clearly the set of solutions to $\mathbf a \mathbf s =b$ forms an antichain under $\preceq$. Noting that $\preceq$ is isomorphic to the typical pointwise ordering on $W^{I_{\mathbf a}}$, we appeal to \Cref{thm:sperner} to upper bound the number of solutions in $W^{I_{\mathbf a}}$ by $\frac{k^{w(\mathbf a)}}{\sqrt{w(\mathbf a)}}$, each of which corresponds to at most $k^{n - w(\mathbf a)}$ vectors in $W^n$.
\end{proof}
\subsection{Large admissibility} \label{admissibility}

A $(n,W)$-word  $s$ is {\em admissible}  for an unsatisfiable set of integer linear inequalities $\F$ over $n$ variables if $s$ satisfies all constraints of $\F$. 
A set of $(n,W)$-words is admissible for $\F$ if all its elements are admissible. 
\textcolor{black}{Let $\A(\F,W)$ be}  the set of all admissible $(n,W)$-words for $\F$.

The interesting sets $W$ for an unsatisfiable set of integer linear inequalities $\F$ are those such that almost all $(n,W)$-words are admissible for $\F$. 
We will apply our method on sets of integer linear inequalities which are a translation of unsatisfiable CNF's generated over a given domain. Typically  these formulas on a size $n$ domain have a number of variables which is not exactly $n$ but a function of $n$, $\nu(n)\geq n$. 
Hence  for the rest of this section we consider $\sF:=\{\F_{n}\}_{n \in \mathbb N}$ as a family of sets of unsatisfiable integer linear inequalities, where $\F_n$ has $\nu(n) \geq n$ variables. We call $\sF$ an  {\em unsatisfiable family}. \\

Consider then the following definition (recalling that we denote $k=|W|$):

\begin{definition} 
	\label{def:almostfull}
	$\sF$ is {\em almost full} if $|\A(\F_n,W)| \geq k^{\nu(n)}-o(k^{\nu(n)})$. 
\end{definition}

\medskip
Notice that, because of the $o$ notation, Definition \ref{def:almostfull} might be not necessarily true for all $n\in \mathbb N$, but only starting from some $n_{\sF}$.    
\begin{definition}\label{def:nLargeEnough}
	Given some almost full family $\sF$ (over $\nu(n)$ variables)  we let $n_{\sF}$ be the natural number with 
	\[
	\frac{k^{\nu(n)}}{|\A(\F_n, W)|} \leq 2 \text{\quad for all \quad} n \geq n_{\sF}.
	\]
\end{definition}
As an example we prove $\SPHP$ is almost full (notice that in the case of $\SPHP_n$, $\nu(n)=n$).
\begin{lem}
\label{lem:SPHPfull}
$\SPHP_n$ is almost full.
\end{lem}
\begin{proof}
Fix $W=\{0,1/2\}$ so that $k=|W|=2$. Let $U$ be the set of all $(n,W)$-words with at least four coordinates set to $1/2$. $U$ is admissible for $\SPHP_n$ since inequalities $x_i+x_j \leq 1$  are always satisfied for any value in $W$ and inequalities $x_1+\ldots +x_n\geq 2$ are satisfied by all points in $U$ which contain at least four $1/2$s. By a simple counting argument, in $U$ there are $2^n - 4n^3= 2^n-o(2^n)$ admissible $(n,W)$-words. Hence the claim.
\end{proof}

\begin{lem}
\label{lem:width}
Let $\sF=\{\F_{n}\}_{n \in \mathbb N}$ be an almost full unsatisfiable family, where $\F_n$ has $\nu(n)$ variables.
Further let $\T$ be a $\SP$ refutation of $\F$ of minimal width $\omega$. If $n \geq n_{\sF}$ then $|\T| = \Omega(\sqrt{w})$.
\end{lem}
\begin{proof}
We  estimate \textcolor{black}{the rate at which the slabs} of the queries in $\T$ rule out admissible points in $U$. Let $\ell$ be the least common multiple of the denominators in $W$. Every $(n, W)$-word $x$ falling in the slab of some query $(\mathbf{a}, b)$ satisfies one of $\ell$ equations $\mathbf{a} x = b + i/\ell, 1 \leq i < \ell$ (as $\mathbf{a}$ is integral).  Note that as $|W|$ is a constant independent of $n$, so is $\ell$.

 Since  all the queries in $\T$ have width at least $w$, according to Lemma \ref{lem:rulingout}, each query in $\T$ rules out at most $\ell \cdot \frac{k^{\nu(n)}}{\sqrt{w}}$ admissible points. By Fact \ref{fact:slab}  no point survives at the leaves, in particular the admissible points. Then it must be that 
$$
|\T|  \ell \cdot \frac{k^{\nu(n)}}{\sqrt{w}} \geq |\A(\F_{n}, W)| \text{ \quad which means \quad } |\T| \ell \cdot \frac{k^{\nu(n)}}{|\A(\F_{n}, W)|} \geq \sqrt{w}
$$

We finish by noting that, by the assumption $n \geq n_{\sF}$, and then by Definition \ref{def:nLargeEnough}, we have $2 \geq 
 \cdot \frac{k^{\nu(n)}}{|\A(\F_{n}, W)|}$, so $|\T| \geq \sqrt{w}/(2 \ell) \in \Omega(\sqrt{w})$.
\end{proof}
  

\subsection{Main theorem}
We focus on restrictions $\rho$ that after applied on an unsatisfiable family $\sF=\{\F_n\}_{n \in \mathbb N}$, reduce the set $\F$ to another set in the same family.

\begin{definition} 
\label{def:self} Let $\sF=\{\F_n\}_{n \in \mathbb N}$ be an unsatisfiable family and $c$ a positive constant. $\sF$  is {\em $c$-self-reducible} if for any set $V$ of variables, with $|V| = v < n / c$, there is a restriction $\rho$ with domain $V'\supseteq V$, such that 
$\F_n\!\!\restriction_\rho = \F_{n - c v}$ (up to renaming of variables).
\end{definition}

Let us motivate the definition with an example.

\begin{lem}
\label{lem:SPHPself}
$\SPHP_n$ is $1$-self-reducible.
\end{lem}
\begin{proof}
Whatever set of variables $x_i$, $i \in I \subset [n]$ we consider, it is sufficient to set $x_i$ to $0$ to fulfill Definition \ref{def:self}. 
\end{proof}

\begin{thm} \label{thm:ACLowerBound}
	Let $\sF:=\{\F_{n}\}_{n \in \mathbb N}$ be a unsatisfiable set of integer linear inequalities which is almost full and $c$-self-reducible. If $\F_n$ defines a feasible $\LP$ whenever $n > n_{\sF}$, then for $n$ large enough, the shortest $\SP$ proof of $\F_n$ is of length $\Omega(\sqrt[4]{n})$.
\end{thm}

\begin{proof}
	Take any $\SP$ proof  $\mathcal{T}$ refuting $\F_n$ and fix $t = \sqrt[4]{n}$.

	The proof proceeds by stages $i\geq 0$ where $\T_0=\T$. The stages will go on 
	while  the invariant property  (which at stage $0$ is true since $n > n_{\sF}$ and $c$ a positive constant)
	$$ n - i c t^3 > \max\{ n_{\sF}, n(1-1/c)\}$$  
	holds.
	
	At the stage $i$ we let $\Sigma_i = \{(\mathbf a,b) \in \T_i : w(\mathbf a) \leq t^2 \}$ and $s_i= |\Sigma_i|$. 
	If $s_i\geq t$ the claim is trivially proven. If $s_i=0$, then all queries in $\T_i$ have width at least $t^2$  
	and by Lemma \ref{lem:width} (which can be applied since  $n - i c t^3 >  n_{\sF}$)
	the claim is proven (for $n$ large enough).
	
	So assume that $0< s_i < t$. Each of the queries in $\Sigma_i$ involves at most $t^2$ nonzero coefficients, hence in total they mention at most $s_i t^2 \leq t^3$ variables.  Extend this set of variables to some $V'$ in accordance with \Cref{def:self} (which can be done since, by the invariant,  $i c t^3 < n/c$).
	Set all these variables according to self-reducibility of $\F$ in a restriction $\rho_i$ and define $\T_{i+1}= \T_{i}\!\!\restriction\!\!_{\rho_i}$. Note that by Definition \ref{def:self} and by that of restriction, $\T_{i+1}$ is a $\SP$ refutation of $\F_{n - ic t^3}$ and we can go on with the next stage. (Also note that we do not hit an empty refutation this way, due to the assumption that $\F_n$ defines a feasible LP.)
	
	Assume that the invariant does not hold.  If this is because $n - i c t^3 < n_{\sF}$ then, as each iteration destroys at least one node, 
	\[
	|\T| \geq i > \frac{n - n_{\sF}}{c t^3} \in \Omega(n^{1/4}).
	\]
If this is because  $n - i c t^3 <  n-n/c$, then again for the same reason it holds that 
	\begin{align*} 
	|\T| \geq i > \frac{n}{c^2n^{3/4}} \in \Omega(n^{1/4}). \tag*{\qedhere}
	\end{align*}
\end{proof}

Using Lemmas \ref{lem:SPHPfull} and \ref{lem:SPHPself} and the previous Theorem we get: 
\begin{cor}
The length of any $\SP$ refutation of $\SPHP_n$ is $\Omega(\sqrt[4]{n})$. Hence the minimal depth is $\Omega(\log n)$.
\end{cor}

\subsection{Lower bounds for the  Pigeonhole principle}

\begin{definition}
	The \emph{Pigeonhole Principle} $\PHP^m_n$, $m(n) > n$, is the family of unsatisfiable integer linear inequalities defined over the variables $\{P_{i, j} : i \in [m], j \in [n]\}$ consisting of the following inequalities:
$$
\begin{array}{ll}		
\sum_{j = 1}^n P_{i, j} \geq 1 \quad \forall i \in [m] &  \text{(every pigeon goes into some hole)} \\
		P_{i, k} + P_{j, k} \leq 1 \quad  \forall k \in [n], i \neq j \in [m] & \text{(at most one pigeon enters any given hole)}
		\end{array}
$$	
\end{definition}

We present a lower bound for $\PHP^m_n$ closely following that for $\SPHP_n$, in which we largely ignore the diversity of different pigeons (which makes the principle rather like $\SPHP_n$).

In this subsection we fix $W = \{0, 1/2\}$, and for the sake of brevity refer to $(n, W)$-words as \emph{biwords}.

In this section we fix $m$ to be $n + d$, for any fixed $d \in \mathbb{N}$ at least one.

\begin{lem}
	The $\PHP^{n+d}_n$ is almost full.
	\label{lem:PHP-fullness}
\end{lem}
\begin{proof}
	We show that there are at least $2^{mn-1}$ admissible biwords (for sufficiently large $n$).
	For each pigeon $i$, there are admissible valuations to holes so that, so long as at least two of these are set to $1/2$, the others may be set to anything in $\{0,1/2\}$. This gives at least $2^n-(n+1)$ possibilities. Since the pigeons are independent, we obtain at least $(2^n-(n+1))^m$ biwords. Now this is $2^{mn}\left(1-\frac{n+1}{2^n}\right)^m$ where $\left(1-\frac{n+1}{2^n}\right)^m \sim e^{\frac{-(n+1)m}{2^n}}$ whence, $\left(1-\frac{n+1}{2^n}\right)^m \geq e^{\frac{-(n+2)m}{2^n}}$ for sufficiently large $n$. It follows there is a constant $c$ so that:
	\[ 2^{mn}\left(1-\frac{n+1}{2^n}\right)^m \geq 2^{mn-\frac{c(n+2)m}{2^n}} \geq 2^{mn-1} \]
	for sufficiently large $n$.
\end{proof}

\begin{lem}
	The $\PHP^{n+d}_n$ is $1$-self-reducible.
	\label{lem:PHP-sr}
\end{lem}
\begin{proof}
	We are given some set $I$ of variables from \textcolor{black}{$\PHP^{n+d}_n$}. These variables will mention some set of holes $H := \{ j : P_{i, j} \in I \text{ for some i} \}$ and similarly a set of pigeons $P$. Each of $P$, $H$ have size at most $|I|$ and we extend them both arbitrarily to have size exactly $|I|$. Our restriction matches $P$ and $H$ in any way and then sets any other variable mentioning a pigeon in $P$ or a hole in $H$ to $0$.
\end{proof}

\begin{thm}
	The length of any $\SP$ refutation of $\PHP^{n + d}_n$ is $\Omega(n^{1/4})$.
	\label{thm:PHP}
\end{thm}
\begin{proof}
	Note that the all $1/2$ point is feasible for $\PHP^{n+d}_n$. Then with \Cref{lem:PHP-fullness} and \Cref{lem:PHP-sr} in hand we meet all the prerequisites for \Cref{thm:ACLowerBound}.	
\end{proof}

By simply noting that a $\SP$ refutation is a binary tree, we get the following corollary.

\begin{cor}
	The $\SP$ depth of the $\PHP^{n + d}_n$ is $\Omega(\log n)$.
\end{cor}

\subsection{Lower bounds for Tseitin contradictions over the complete graph}

\begin{definition}
	For a graph $G = (V, E)$ along with a charging function $\omega: V \to \{0, 1\}$  satisfying $\sum_{v \in V} \omega(v) = 1 \mod 2$. The \emph{Tseitin contradiction} $\TS(G,\omega)$ is the set of linear inequalities which translate the CNF encoding of 
	\[
	\sum_{\substack{ e \in E\\ e \ni v}} x_e = \omega(v) \mod 2.
	\]
	for every $v \in V$, where the variables $x_e$ range over the edges $e \in E$.
	
\end{definition}

In this subsection we consider $\TS(K_n,\omega)$ and $\omega$ will always be an odd charging for $K_n$. 
We let $N := \binom{n}{2}$ and we fix $W = \{0, 1/2, 1\}$, $k=3$ and for the sake of brevity refer to $(n, W)$-words as \emph{triwords}. We will abuse slightly the notation of \Cref{admissibility} and consider the family $\{\TS(K_n, \omega) \}_{n \in \mathbb{N}, \; \omega \text{ odd}}$ as a single parameter family in $n$. The reason we can do this is because the following proofs of almost fullness and self reducibility do not depend on $\omega$ at all (so long as it is odd, which we will always ensure).

\begin{lem}
	$\TS(K_n, \omega)$ is almost full.
\end{lem}
\begin{proof}
	
	We show that $\TS(K_n, \omega)$ has at least $c 3^N$ admissible triwords, for any constant $0 < c < 1$ and $n$ large enough.
	We define the assignment $\rho$ setting all edges (i.e. $x_e$) 
	 to a value in $W=\{0,1,1/2\}$ independently and uniformly at random, and inspecting the probability that  
	  some fixed constraint for a node $v$ is violated  by $\rho$.
	
	Clearly if at least $2$ edges incident to $v$ are set to $1/2$ its constraint is satisfied. If none of its incident edges are set to $1/2$ then it is satisfied with probability $1/2$. Let $A(v)$ be the event ``{\em no edge incident to $v$ is set to $1/2$ by $\rho$}'' and let $B(v)$ be the event that ``{\em exactly one edge incident to $v$ is set to $1/2$ by $\rho$}''.
	Then: 
	\begin{gather*}
		\Pr[\text{$v$ is violated}] \leq \frac{1}{2} \Pr[A(v)] + \Pr[B(v)] 
		= \frac{1}{2} \frac{2^{n - 1}}{3^{n - 1}} + \frac{(n-1)2^{n - 2}}{3^{n - 1}} = n \frac{2^{n - 2}}{3^{n - 1}}.
	\end{gather*}

Therefore, by a union bound, the probability that there exists a node with violated parity is bounded above by $n^2 \frac{2^{n - 2}}{3^{n - 1}}$, which approaches $0$ as
	$n$ goes to infinity.
\end{proof}

\begin{lem}
	$\TS(K_n, \omega)$ is $2$-self-reducible.
\end{lem}

\begin{proof}
	We are given some set of variables $I$. Each variable mentions $2$ nodes, so extend these mentioned nodes arbitrarily to a set $S$ of size exactly $2 |I|$, which we then
	hit with the following restriction: if $S$ is evenly charged, pick any matching on the set $\{s \in S : w(s) = 1\}$, set those edges to $1$, and set any other edges involving some vertex in $S$ to $0$. Otherwise (if $S$ is oddly charged) pick any $l \in \{s \in S : w(s) = 1\}$ and $r \in [n] \setminus S$ and set $x_{l r}$ to $1$. $\{s \in S : w(s) = 1\} \setminus l$ is now even so we can pick a matching as before. And as before we set all other edges involving some vertex in $S$ to 0. In the first case the graph induced by $[n] \setminus S$ must be oddly charged (as the original graph was). In the second case this induced graph was originally evenly charged, but we changed this when we set $x_{l r}$ to $1$.
\end{proof}

\begin{lem}
	For any oddly charged $\omega$ and $n$ large enough, all $\SP$ refutations of $\TS(K_n, \omega)$ have length $\Omega(\sqrt[4]{n})$.
\end{lem}
\begin{proof}
	We have that the all $1/2$ point is feasible for $\TS(K_n, \omega)$. Then we can simply apply \Cref{thm:ACLowerBound}.
\end{proof}

\begin{cor}
	The depth of any SP refutation of $\TS(K_n, \omega)$ is $\Omega(\log{n})$.
\end{cor}

	\subsection{Lower bound for the Least Ordering Principle}
	
	\begin{definition}
		Let $n \in \mathbb{N}$. The \emph{Least Ordering Principle}, $\LOP_n$, is the following set of unsatisfiable linear inequalities over the variables $P_{i, j}$ ($i \neq j \in [n]$):
		\begin{gather*}
			P_{i, j} + P_{j, i} = 1 \quad \text{ for all $i \neq j \in [n]$} \\
			P_{i, k} - P_{i, j} - P_{j, k} \geq \textcolor{black}{-1} \text { for all $i \neq j \neq k \in [n]$} \\
			\sum_{i = 1, i \neq j}^n P_{i, j} \geq 1 \text{ for all $j \in [n]$}
		\end{gather*}
	\end{definition}

	\begin{lem} \label{lem:LOPMentioned}
		For any $X \subseteq [n]$ of size at most $n-3$, there is an admissible point for $\LOP_n$ integer on any edge mentioning an element in $X$.
	\end{lem}
	\begin{proof}
		Let $\preceq$ be any total order on the elements in $X$. Our admissible point $x$ will be
		\[
		x(P_{i, j}) = \begin{cases}
			1 & \text{ if $i, j \in X$ and $i \preceq j$, or if $i \not\in X, j \in X$} \\
			0 & \text{ if $i, j \in X$ and $j \preceq i$, or if $i \in X, j \not\in X$} \\
			1/2 & \text{otherwise (if $i, j \not\in X$)}.
		\end{cases}
		\]
		
		The existential axioms $\sum_{i = 1, i \neq j}^n P_{i, j}$ are always satisfied - if $j \in X$ then there is some $i \not\in X$ with $P_{i, j} = 1$, and otherwise there are at least two distinct $i, k \neq j \in X$ with $P_{i, j}, P_{k, j} = 1/2$. For the transitivity axioms $P_{i, k} - P_{i, j} - P_{j, k} \geq \textcolor{black}{-1}$, note that if $2$ or more of $i, j, k$ are not in $X$ there are at least $2$ variables set to $1/2$, and otherwise it is set in a binary fashion to something consistent with a total order.
\end{proof}
	
	We will assume that a $\SP$ refutation $\T$ of $\LOP_n$ only involves variables $P_{i,j}$ where $i<j$ - this is without loss of generality as we can safely set $P_{j,i}$ to $1 - P_{i,j}$ whenever $i>j$, and will often write $P_{\{i,j\}}$ for such a variable. We consider the underlying graph of the support of a query, i.e. an undirected graph with edges $\{i,j\}$ for every variable $P_{\{i,j\}}$ that appears with non-zero coefficient in the query.
	
	For some function $f(n)$, we say the query is \emph{$f(n)$-wide} if the smallest edge cover of its graph has at least $f(n)$ nodes . A query that is not $f(n)$-wide is \emph{$f(n)$-narrow}. The next lemma works much the same as \Cref{thm:ACLowerBound}.

	\begin{lem} \label{lem:LOPWidth}
		Fix $\epsilon > 0$ and suppose we have some $\SP$ refutation $\T$ of $\LOP_n$, where $|\T| \leq n^{\frac{1 - \epsilon}{4}} $. Then, if $n$ is large enough, we can find some $\SP$ refutation $\T'$ of $\LOP_{c \cdot n}$, where $c$ is a positive universal constant that may be taken arbitrarily close to $1$, $\T'$ contains only $n^{3/4}$-wide queries, and $|\textcolor{black}{\T'}| \leq |\T|$.
	\end{lem}
	\begin{proof}
		We iteratively build up an initially empty restriction $\rho$. At every stage $\rho$ imposes a total order on some subset $X \subseteq [n]$ and places the elements in $X$ above the elements not in $X$. So $\rho$ sets every edge not contained entirely in $[n] \setminus X$ to something binary, and $\LOP_n\!\!\restriction_\rho = \LOP_{n - |X|}$ (up to a renaming of variables).
		
		While there exists a $n^{3/4}$-narrow query $q \in \T\!\!\restriction_\rho$ we simply take its smallest edge cover, which has size at most $n^{3/4}$ by definition, and add its nodes in any fashion to the total order in $\rho$. Now all of the variables mentioned by $q \in \T\!\!\restriction_\rho$ are fully evaluated and $q$ is redundant. We repeat this at most  $n^\frac{1 - \epsilon}{4}$ times (as $|\T| \leq n^\frac{1 - \epsilon}{4}$ and each iteration renders at least one query in $\T$ redundant). At each stage we grow the domain of the restriction by at most $n^{3/4}$, so the domain of $\rho$ is always bounded by $n^{1 - \epsilon/4}$. We also cannot exhaust the tree $\T$ in this way, as otherwise $\T$ mentioned at most $n^{1 - \epsilon/4} < n - 3$ elements and by \Cref{lem:LOPMentioned} there is an admissible point not falling in any slab of $\T$, violating \Cref{fact:slab}. \\
		
		When this process finishes we are left with a $n^{3/4}$-wide refutation $\T'$ of $\LOP_{n - n^{1 - \epsilon/4}}$. As $\epsilon$ was fixed we find that as $n$ goes to infinity $n - n^{1 - \epsilon/4}$ tends to $n$.
	\end{proof}

	\begin{lem}\label{lem:LOPHypercube}
		Let $d \leq (n-3)/2$. Given any disjoint set of pairs $D = \{ \{l_1, r_1\}, \ldots,$ $\{ l_d, r_d\} \}$ (where without loss of generality $l_i < r_i$ in $[n]$ as natural numbers) and any binary assignment $b \in \{0,1\}^D$, the assignment $x_b$ with 
		
		\[
		x_b(P_{\{i, j\}}) = \begin{cases}
			b(\{l_k, r_k\}) & \text{ if $\{i, j\} = \{l_k, r_k\} \in \textcolor{black}{D}$ for some $k$} \\
			1/2 & \text{otherwise}
		\end{cases}
		\]
		is admissible.
	\end{lem}
	\begin{proof}
	 	The existential axioms $\sum_{i = 1, i \neq j}^n P_{i, j}$ are always satisfied, as for any $j$ there are at least $n-2$  $i \in [n]$ different from $j$ with $P_{i, j} = 1/2$. For the transitivity axioms $P_{i, k} - P_{i, j} - P_{j, k} \geq \textcolor{black}{-1}$, note that due to the disjointness of $D$ at least two variables on the left hand side are set to $1/2$.
	\end{proof}

	\begin{thm}
		Fix some $\epsilon > 0$ and let $\T$ any $\SP$ refutation of $\LOP_n$. Then, for $n$ large enough, $|\T| \in \Omega(n^{\frac{1 - \epsilon}{4}})$.
	\end{thm}
	\begin{proof}
		Suppose otherwise - then, by \Cref{lem:LOPWidth}, we can find some $\T'$ refuting $\LOP_{c n}$, with $|\T'| \leq |\T|$, every query $n^{3/4}$-wide, and $c$ independent of $n$. We greedily create a set of pairs $D$ by processing the queries in $\T'$ one by one and choosing in each a matching of size $n^{1/2}$ disjoint from the elements appearing in $D$ - this always succeeds, as at every stage $|D| \in O(n^\frac{1 - \epsilon}{4} \cdot n^{1/2})$ and involves at most $O(2 n^\frac{3 - \epsilon}{4}) < n^{3/4} - n^{1/2}$ elements.\\

		So by \Cref{lem:LOPHypercube}, after setting every edge not in $D$ to $1/2$, we have some set of linear polynomials $\R = \{a(x) = \mathbf{a} x - b - 1/2 : (\overline{a}, b) \in \T'\}$ covering the hypercube $\{0, 1\}^D$, where every polynomial $p \in \R$ mentions at least $n^{1/2}$ edges. By \Cref{lem:rulingout} each such polynomial in $\R$ rules out at most $\nicefrac{2^{|D|}}{n^{1/4}}$ points, and so we must have $|\T| \geq |T'| \geq |\R| \geq n^{1/4}$.
	\end{proof}

\section{The covering method}
\begin{definition}
		A set $L$ of linear polynomials with real coefficients is said to be a \emph{cover} of the cube $\{0,1\}^n$ if
		for each $v \in \{0,1\}^n$, there is a $p \in L$ such that $p(v)=0$.
\end{definition}

In \cite{essential} Linial and Radhakrishnan  considered the problem  of the minimal number of hyperplanes needed to cover the cube $\{0,1\}^n$. Clearly every such cube can be covered by the zero polynomial, so to make the problem more meaningful they defined the notion 
of an {\em essential covering} of $\{0,1\}^n$.

\begin{defiC}[\cite{essential}] \label{def:essential}
	A set $L$ of linear polynomials with real coefficients is said to be an \emph{essential cover} of the cube $\{0,1\}^n$ if
	\begin{enumerate}[align=left]
		\item[(E1)]$L$ is a cover of $\{0,1\}^n$, 
		\item[(E2)] no proper subset of $L$ satisfies (E1), that is, for every $p \in L$, there is a $v \in \{0,1\}^n$ such that $p$ alone takes the value $0$ on $v$, and
		\item[(E3)] every variable appears (in some monomial with non-zero coefficient) in some polynomial of $L$.
	\end{enumerate}
\end{defiC}

They then proved that any essential cover $E$ of the hypercube $\{0,1\}^n$ must satisfy $|E| \geq \sqrt{n}$. We will use the slightly strengthened lower bound given in \cite{essentialLB}:

\begin{thm} \label{thm:essentialLB}
	Any essential cover $L$ of the cube with $n$ coordinates satisfies $|L| \in \Omega(n^{0.52})$.
\end{thm}

We will need an auxillary definition and lemma.

\begin{definition}
	Let $L$ be a cover of $\{0, 1\}^{I}$ for some index set $I$. Some subset $L'$ of $L$ is an \emph{essentialisation} of $L$ if $L'$ also covers $\{0, 1\}^{I}$ but no proper subset of it does.
\end{definition}
\begin{lem}
	Let $L$ be a cover of the cube $\{0,1\}^n$  and $L'$ be any essentialisation of $L$. Let $M'$ be the set of variables appearing with nonzero coefficient in $L'$. Then $L'$ is an essential cover of $\{0, 1\}^{M'}$.
\end{lem}
\begin{proof} \textcolor{black}{Let $M=[n]$.}
	\begin{itemize}[align=left]
		\item[(E1)] Given any point $x \in \{0, 1\}^{M'}$, we can extend it arbitrarily to a point $x' \in \{0, 1\}^{M}$. Then there is some $p \in L'$ with $p(x') = 0$ - but $p(x') = p(x)$, as $p$ doesn't mention any variable outside of $M'$.
		
		\item[(E2)] Similarly to the previous point, this will follow from the fact that if some set $\T$ covers a hypercube $\{0, 1\}^I$, it also covers $\{0, 1\}^{I'}$ for any $I' \supseteq I$.
		
		Suppose some proper subset $L'' \subset L'$ covers $\{0, 1\}^{M'}$, then it covers $\{0, 1\}^{M}$ - but we picked $L'$ to be a minimal set with this property.
		
		\item[(E3)] We defined $M'$ to be the set of variables appearing with nonzero coefficient in $L'$. \qedhere
	\end{itemize}
\end{proof}

\subsection{The covering method and Tseitin}

Let $H_n$ denote the $n \times n$ grid graph. Fix some $\omega$ with odd charge and a $\SP$ refutation $\T$ of $\TS(H_n, \omega)$. \Cref{fact:slab} tells us that for every point $x$ admissible for $\TS(H_n, \omega)$, there exists a query $(\mathbf{a}, b) \in \T$ such that $b < \mathbf{a} x < b + 1$. In this section we will only consider admissible points with entries in $\{0, 1/2, 1\}$, turning the slab of a query $(\mathbf{a}, b)$ into the solution set of the single linear equation 
$\mathbf{a} \cdot x = b + 1/2$. So we consider $\T$ as a set of such equations.

We say that an edge of $H_n$ is {\em mentioned}
in $\T$ if the variable $x_e$ appears with non-zero coefficient in some 
query in $\T$. We can see $H_n$ as a set of $(n - 1)^2$ squares (4-cycles), and we can index them as if they were a Cartesian grid, starting from $1$. Let $S$ be the set of $\lfloor (n/3)^2 \rfloor$ squares in $H_n$ \textcolor{black}{obtained} by picking squares with indices that become $2 \text{ (mod } 3)$. This ensures that every two squares in $S$ in the same row or column have at least two other squares between them, and that no selected square is on the perimeter.

We will assume without loss of generality that $n$ is a multiple of 3, so $|S| = (n/3)^2$. Let $K = \bigcup_{t \in S} t$ be the set of edges mentioned by $S$, and for some $s \in S$, let $K_s :=\bigcup_{t \in S, t \neq s} t$ be the set of edges mentioned in $S$ by squares other than $s$.

\begin{lem} \label{lem:bys}
	For every $s \in S$ we can find an admissible point $b_s \in \{0, 1/2, 1\}^{E(H_n)}$ such that
	\begin{enumerate}
		\item $b_s(x_e) = 0$ for all $e \in K_s$, and
		\item $b_s$ is fractional only on the edges in $s$.
	\end{enumerate}
\end{lem}
\begin{proof}
	We use the following fact due to A. Urquhart in \cite{DBLP:journals/jacm/Urquhart87}
	\begin{fact}
		For each vertex $v$ in $H_n$ there is a totally binary assignment, called $v$-critical in \cite{DBLP:journals/jacm/Urquhart87}, satisfying all parity axioms  in  $\TS(H_n, \omega)$ except the parity axiom of node $v$.
	\end{fact} 
	
	Pick any corner $c$ of $s$. Let $b_s$ be the result of taking any $c$-critical assignment of the variables of $\TS(H_n, \omega)$ and setting the edges in $s$ to $1/2$. $b_s$ is admissible, as $c$ is now adjacent to two variables set to $1/2$ (so its originally falsified parity axiom becomes satisfied) and every other vertex is either unaffected or also adjacent to two $1/2$s. While $b_s$ sets some edge $e \in K_s$ to 1, flip all of the edges in the unique other square containing $e$. This other square always exists (as no square touches the perimeter) and also contains no other edge in $K_s$ (as there are at least two squares between any two squares in $S$). Flipping the edges in a cycle preserves admissibility, as every vertex is adjacent to $0$ or $2$ flipped edges.
\end{proof}

\begin{definition} \label{def:tseitinSub}
	Let $V_S := \{v_s : s \in S\}$ be a set of new variables.  For $s \in S$ define the substitution $h_s$, taking the variables of $\TS(H_n, \omega)$ to $V_S \cup \{0, 1/2, 1\}$, as
	
	\[
	h_s(x_e) := \begin{cases}
		b_s(e) & \text{ if $e$ is not mentioned in $S$, or if $e$ is mentioned by $s$,} \\
		v_t & \text{if $e$ is mentioned by some square $t \neq s \in S$.} \\
	\end{cases}
	\]
	
	(where $b_s$ is from \Cref{lem:bys}).
\end{definition}

\begin{definition}
	Say that a linear polynomial $p  = c + \sum_{e \in E(H_n)} \mu_e x_e$ with coefficients $\mu_e \in \mathbb{Z}$ and some constant part $c \in \mathbb{R}$ has \emph{odd coefficient in $X \subseteq E(H_n)$} if $\sum_{e \in X} \mu_e$ is an odd integer. Given some polynomial $p$ in the variables $x_e$ of Tseitin, and some square $s \in S$, let $p_s$ be the polynomial in variables $V_S$ \textcolor{black}{obtained} by applying the substitution $x_e \to h_s(x_e)$. Also, for any set of polynomials $\T$ in the variables $x_e$ let $\T_s := \{p_s : p \in \T, p \text{ has odd coefficient in } s \}.$
\end{definition}

Given some assignment $\alpha \in \{0, 1\}^{V_S \setminus \{v_s\}}$, and some $h_s$ as in \Cref{def:tseitinSub}, we let $\alpha(h_s)$ be the assignment to the variables of $\TS(H_n, \omega)$ gotten by replacing the $v_t$ in the definition of $h_s$ by $\alpha(v_t)$.

\begin{lem}
	Let $s \in S$. For all $2^{|S| - 1}$ settings $\alpha$ of the variables in $V_S \setminus \{s\}$, $\alpha(h_s)$ is admissible.
\end{lem}
\begin{proof}
	When $\alpha(v_t)$ is all $0$, $h_s = b_s$ is admissible (by \Cref{lem:bys}). Toggling some $v_t$ only has the effect of flipping every edge in a cycle, which preserves admissibility.
\end{proof}

\begin{lem} \label{lem:TCovered}
\textcolor{black}{Let $\T$ be an $\SP$ refutation of $\TS(H_n, \omega)$.}	$\T_s$ covers $\{0, 1\}^{V_S \setminus \{s\}}$.
\end{lem}
\begin{proof}
	For every setting of $\alpha \in \{0, 1\}^{V_S \setminus \{s\}}$, $\alpha(h_s)$ as defined above is admissible and therefore covered by some $p \in \T$, which has constant part $1/2 + b$ for some $b \in \mathbb{Z}$. Furthermore, as $\alpha(h_s)$ sets every edge in $s$ to $1/2$, every such $p$ must have odd coefficient in front of $s$ - otherwise
	\[
	p(\alpha(h_s)) = 1/2 + b + (1/2) \left( \sum_{e \in s} \mu_e \right) + \sum_{e \not\in s} \mu_e \alpha(h_s)(x_e)
	\]
	can never be zero, as the $1/2$ is the only non integral term in the summation.
\end{proof}

\begin{thm}
	Any $\SP$ refutation $\T$ of $\TS(H_n, \omega)$ must have $|\T| \in \Omega(n^{1.04})$.
\end{thm}
\begin{proof}
	
	We are going to find a set of pairs $(L_1, M_1), (L_2, M_2), \ldots, (L_q, M_q)$, where the $L_i$ are pairwise disjoint nonempty subsets of $\T$, the $M_i$ are subsets of $V_S$, and for every $i$ there is some $s_i \in S \setminus \bigcup_{i = 1}^q M_i$ such that $|(L_i)_{s_i}| \geq |M_i|^{0.52}$. These pairs will also satisfy the property that
	\begin{equation} \label{eq:SCovered}
		\{ s_i : 1 \leq i \leq q \} \cup \bigcup_{i = 1}^q M_i = S.
	\end{equation}
	
	As $|S| = (n/3)^2$ this would imply that $\sum_{i = 1}^q |M_i| \geq (n/3)^2 - q$. If $q \geq (n/3)^2 / 2$, then (as the $L_i$ are nonempty and pairwise disjoint) we have $|\T| \geq (n/3)^2 / 2 \in \Omega(n^{1.04})$. Otherwise $\sum_{i = 1}^q |M_i| \geq (n/3)^2/2$, and as (by \Cref{thm:essentialLB}) each $|L_i| \geq |M_i|^{0.52}$,
	
	\begin{equation}\label{eq:essentialMainEq}
	|\T| \geq \sum_{i = 1}^q |L_i| \geq \sum_{i = 1}^q |M_i|^{0.52} \geq \left( \sum_{i = 1}^q |M_i| \right)^{0.52} \geq \left( (n/3)^2/2 \right)^{0.52} \in \Omega(n^{1.04}).
	\end{equation}
	
	We create the pairs by stages. Let $S_1 = S$ and start by picking any $s_1 \in S_1$. By \Cref{lem:TCovered} $\T_{s_1}$ covers $\{0, 1\}^{V_{S_1} \setminus \{s_1\}}$ and has as an essentialisation $E$, which will be an essential cover of $\{0, 1\}^{V'}$ for some $V' \subseteq V_{S_1} \setminus \{s_1\}$. We create the pair $(L_1, M_1) = (\{p : p_{s_1} \in E\}, V')$ and update $S_2 = S_1 \setminus \left( V' \cup \{s_1\} \right)$. (Note that $V'$ could possibly be empty - for example, if the polynomial $x_e = 1/2$ appears in $\T$, where $e \in s_1$. In this case however we still have $|L_1| \geq |M_1|^{0.52}$. If $V'$ is not empty we have the same bound due to \Cref{thm:essentialLB}.)  If $S_2$ is nonempty we repeat with any $s_2 \in S_2$, and so on.
	
	We now show that as promised the left hand sides of these pairs partition a subset of $\T$, which will give us the first inequality in \Cref{eq:essentialMainEq}. Every polynomial $p$ with $p_{s_i} \in  L_i$ has every $v_t$ mentioned by $p_{s_i}$ removed from $S_j$ for all $j \geq i$, so the only way $p$ could reappear in some later $L_j$ is if $p_{s_j} \in \T_{s_j}$, where $v_{s_j}$ does \emph{not} appear in $p_{s_i}$. Let $\mu_e, e \in s_j$ be the coefficients of $p$ in front of the four edges of $s_j$. The coefficient in front of $v_{s_j}$ in $p_{s_i}$ is just $\sum_{e \in s_j} \mu_e$. As $v_{s_j}$ failed to appear this sum is $0$ and $p$ does not have the odd coefficient sum it would need to appear in $\T_{s_j}$.
\end{proof}

   \section{Conclusions and acknowledgements}
    The $\Omega(\log n)$ depth lower bound for $\TS(H_n,\omega)$ is 
    not optimal since \cite{DBLP:conf/innovations/BeameFIKPPR18} proved an $O(\log^2 n)$ upper bound for  $\TS(G,\omega)$, for any bounded-degree $G$.  
    Even to apply the covering method to prove a  depth $\Omega(\log^2 n)$  lower bound on $\TS(K_n,\omega)$  (notice that it would imply a superpolynomial length lower bound), the polynomial covering of the boolean cube should be  improved to work on general cubes. To this end the algebraic method used in \cite{essential} should be improved to work with generalizations of multilinear polynomials.  

     \textcolor{black}{We use essential covering of the Boolean cube to prove size lower bounds in $\SP$. However our lower bounds are quite weak, in fact almost linear. It would be very interesting to understand whether the essential covering technique can prove stronger size lower bounds in $\SP$.  Notice that  any polytope in $[0, 1]^n$ can be covered by $n$ hyperplanes. But the polytopes produced  by the Stabbing Planes procedure are more specific and in fact they might require a weaker form of covering. For example a recursive covering, where the slabs on one branch do not affect the points on a different independent branch. Exploring this and similar ideas might eventually lead to improve our lower bounds.}
     
    While finishing the writing of this manuscript we learned about \cite{Fleming21} from Noah Fleming. We would like to thank him for answering some questions on his paper \cite{DBLP:conf/innovations/BeameFIKPPR18}, and sending us the manuscript \cite{Fleming21} and for comments on a preliminary  version of this work.  

    We are grateful also to several anonymous referees on both the conference and journal versions of this paper.


%
%


\bibliographystyle{alphaurl}
\bibliography{refs.bib}

\end{document}